\DeclareMathAlphabet{\pazocal}{OMS}{zplm}{m}{n}
\newcolumntype{P}[1]{>{\centering\arraybackslash}p{#1}}
\newtheorem{definition}{Definition}
\newtheorem{assumption}{Assumption}
\newtheorem{proposition}{Proposition}
\newtheorem{corollary}{Corollary}
\newtheorem{remark}{Remark}
\newcommand{\figurename}[1]{Fig.{#1}}
\newcommand{\tablename}[1]{Table{#1}}
\tikzstyle{sum} = [draw, fill=blue!20, circle, node distance=1cm]
\tikzstyle{input} = [coordinate]
\tikzstyle{pinstyle} = [pin edge={to-,thin,black}]
\begin{document}
\title{\LARGE{\textbf{Insights into the explainability of\\Lasso-based DeePC for nonlinear systems}}}

\author{Gianluca Giacomelli, Simone Formentin, Victor G. Lopez, Matthias A. M{\"u}ller, and Valentina Breschi
\thanks{This work is partially supported by the FAIR project (NextGenerationEU, PNRR-PE-AI, M4C2, Investment 1.3), the 4DDS project (Italian Ministry of Enterprises and Made in Italy, grant F/310097/01-04/X56), and the PRIN PNRR project P2022NB77E (NextGenerationEU, CUP: D53D23016100001). It is also partly supported by the ENFIELD project (Horizon Europe, grant 101120657).}
	\thanks{Gianluca Giacomelli and Valentina Breschi are with Control Systems Group, Eindhoven University of Technology, 5612AZ Eindhoven, The Netherlands, (e-mails: \textsl{\{g.giacomelli, v.breschi\}@tue.nl}). Simone Formentin is with the Dipartimento di Elettronica, Informazione e Bioingegneria,
		Politecnico di Milano, 20133 Milan, Italy (e-mail: \textsl{simone.formentin@polimi.it}). Victor G. Lopez and Matthias A. M{\"u}ller are with the Institute of Automatic Control, Leibniz University Hannover, 30167 Hannover, Germany (e-mails: \textsl{\{lopez, mueller\}@irt.uni-hannover.de}).}}

\maketitle

\maketitle

\begin{abstract}
    Data-enabled Predictive Control (DeePC) has recently gained the spotlight as an easy-to-use control technique that allows for constraint handling while relying on raw data only. Initially proposed for linear time-invariant systems, several DeePC extensions are now available to cope with nonlinear systems. Nonetheless, these solutions mainly focus on ensuring the controller's effectiveness, overlooking the explainability of the final result. As a step toward explaining the outcome of DeePC for the control of nonlinear systems, in this paper, we focus on analyzing the earliest and simplest DeePC approach proposed to cope with nonlinearities in the controlled system, using a Lasso regularization. Our theoretical analysis highlights that the decisions undertaken by DeePC with Lasso regularization are unexplainable, as control actions are determined by data incoherent with the system's local behavior. This result is true even when the available input/output samples are grouped according to the different operating conditions explored during data collection. Our numerical study confirms these findings, highlighting the benefits of data grouping in terms of performance while showing that explainability remains a challenge in control design via DeePC.
\end{abstract}

\begin{keywords}
    Data-driven control; Shrinkage strategies; Explainability; Predictive control; Nonlinear systems.
\end{keywords}

\section{Introduction}\label{sec:Intro}
After the seminal works in \cite{coulson2019data,berberich2020data}, one of the main drivers behind the popularity of data-driven predictive control has been its potential to simplify control design, especially for complex systems. Indeed, existing data-driven predictive control strategies directly use raw data to construct the predictor they rely upon, not requiring the designer to undertake any explicit modeling task (which knowingly counts for most of investments in designing advanced controllers~\cite{gevers2005identification,hjalmarsson2005experiment}). While initial efforts to develop data-driven predictive control strategies were mainly directed to linear time-invariant (LTI) systems (see, e.g.,~\cite{dorfler2022bridging, chiuso2025harnessing, verheijen2023handbook}), several extensions of the foundational works in \cite{coulson2019data,berberich2020data} have now been proposed to control nonlinear systems.

Among them, the method proposed in \cite{berberich2022linear} relies on a local linearization of the nonlinear system by iteratively updating the data used to construct the data-driven predictor, with tailored regularization terms exploited to counteract the increasing prediction error as the system moves away from the linearization point. Instead,~\cite{verhoek2023direct} translates general nonlinear dynamics into the linear parameter-varying framework by considering the system's velocity form. On the other hand, when focusing on the class of feedback-linearizable systems,~\cite{alsalti2023data} decomposes the unknown nonlinearities by leveraging an (approximate) basis function. Another approach to design data-driven controllers for nonlinear systems is to exploit the Koopman framework (see, e.g.,~\cite{de2024koopman}), with the resulting approximation leading to a prediction error that increases as the system moves away from its initial condition (see~\cite{iacob2024koopman}). All these approaches have proven to guarantee closed-loop stability as well as performance, especially when the distance between the initial conditions of the estimated linearized system and the nonlinear system is bounded in~\cite{berberich2022linear}. At the same time, the explainability of their outcome, i.e., whether the final user could interpret the decision taken by any of these control schemes, has never been analyzed. Nonetheless, this aspect is crucial for the final users to trust and, ultimately, use these approaches in practice~\cite{voigt2017eu,riva2024towards}. 

\paragraph*{Contribution} To set the cornerstone for the analysis of the explainability of data-driven predictive control schemes for nonlinear systems, we go back to the original work in~\cite{coulson2019data}. Indeed, although conceived for LTI systems, the approach proposed in~\cite{coulson2019data} features a Lasso regularization (for this reason, we denote it as \textit{Lasso-based DeePC}) that might help manage nonlinearities up to a certain distance from the operating condition at which the data used for prediction are collected~\cite{coulson2019regularized}
, shrinking the amount of data actually used for control design.   

To analyze its explainability, we first provide a set of assumptions on the available control data that allow us to provide (for the first time) a possible definition of \textit{explainability} in the predictive control setting driven by data. We then derive the \textit{explicit solution} for the Lasso-based DeePC problem, showing that \textit{the Lasso regularization does not enable such a scheme to prioritize specific subsets of data over others}. This result is key in analyzing the explainability (or lack thereof) of Lasso-based DeePC when the data used to build the predictor are structured in an \textquotedblleft explainable\textquotedblright \ way. Specifically, we introduce a set of data structures that allow one to highlight the different operating conditions under which the data have been collected, thus making the predictor explainable, discussing the role of experiment design in building them. By looking at the explicit solution of Lasso-DeePC when using these explainable matrices, we show once more that no priority can be given through the Lasso regularization alone to specific subsets of data, making the decision of Lasso-DeePC ultimately unexplainable according to our definition of explainability.  

While simplicity is a distinctive trait of Lasso-based DeePC and satisfactory closed-loop performance might be achieved with it under certain conditions even when controlling nonlinear systems as discussed above, our theoretical and numerical analysis thus highlights that understanding its decisions might be difficult.

\paragraph*{Outline} The paper is organized as follows. The setting and goal of this work are introduced in Section~\ref{sec:Setting}, with a focus on the characteristics of the available data as well as our definitions of explainability. The explicit solution for Lasso-based DeePC is presented in Section~\ref{sec:lasso_deepc}, which allows us to discuss its explainability (through the introduction of a set of \textquotedblleft explainable\textquotedblright \ data structures) in Section~\ref{sec:interpretability}. The adherence of our theoretical conclusions to a benchmark numerical case study is discussed in Section~\ref{sec:Numerical}, with the paper ending with some final remarks and directions for future work.

\paragraph*{Notation} Let $\mathbb{N}_{0}$ and $\mathbb{R}_{\geq 0}$ denote the set of natural numbers that include zero and the set of non-negative real numbers, respectively. Given a vector $u \in \mathbb{R}^{n_u}$ and a matrix $A \in \mathbb{R}^{m \times n}$, we define their transposes as $u^{\top}$ and $A^{\top}$, respectively. Moreover, when $A \in \mathbb{R}^{n\times n}$ is invertible, we denote its inverse as $A^{-1}$. When considering a vector  $u \in \mathbb{R}^{n_u}$, inequalities should be seen as component-wise, i.e., $u\geq0$, implies that $u_{j} \geq 0$ for all $j=1,\ldots,n_u$. Zero and identity matrices are respectively denoted as $0$ and $I$, with their dimensions not explicitly reported. A column vector of ones having dimension $n$ is indicated as $\mathds{1}_{n}$. Given a signal $\zeta(t) \in \mathbb{R}^{n_\zeta}$, with $t \in \mathbb{R}_{\geq 0}$ we define its first and second-order derivatives with respect to time as $\dot{\zeta}(t)$ and $\ddot{\zeta}(t)$, respectively. Meanwhile, given a signal $\zeta_{t} \in\mathbb{R}^{n_\zeta}$ for $t \in \mathbb{N}_0$, then ${\zeta_{\left[\tau, \tau'\right]} =\begin{bmatrix} \zeta_{\tau}^\top & \cdots & \zeta_{\tau'}^\top \end{bmatrix}^\top}$, with $\tau,\tau'\in\mathbb{N}_0,~\tau'>\tau$ and the associated Hankel matrix $\pazocal{H}_D(\zeta_{\left[\tau, \tau'\right]})\in\mathbb{R}^{n_\zeta D \times \left( N_\zeta-D+1 \right)}$ of depth $D>0$ is defined as
\begin{equation}\label{eq:Hankel}
    \pazocal{H}_D(\zeta_{\left[\tau, \tau'\right]}) := \begin{bmatrix}
                     \zeta_\tau & \zeta_{\tau+1} & \cdots & \zeta_{\tau'-D}\\
                     \zeta_{\tau+1} & \zeta_{\tau+2} & \cdots & \zeta_{\tau'-D+1}\\
                     \vdots & \vdots & \ddots & \vdots\\
                     \zeta_{\tau+D-1} & \zeta_{\tau+D} & \cdots & \zeta_{\tau'}
\end{bmatrix}.
\end{equation}
Based on this definition, we can introduce that of a persistently exciting signal, formalized in the following. 
\begin{definition}[Persistency of excitation~\cite{willems2005note}]\label{def:persistency_excitation}
    A signal $u_{[0,T-1]}$, with $u_{t} \in \mathbb{R}^{n_u}$ for all $t\in \mathbb{N}_{0}$, is said to be Persistently Exciting (PE) of order $\eta$ if $\mathrm{rank}(\pazocal{H}_{\eta}(u_{[0,T-1]}))=n_u\eta$.  
\end{definition}

\section{Setting \& Goal}\label{sec:Setting}
Consider a nonlinear, time-invariant system $\pazocal{S}$, of which we can only access its inputs $u_t \in \mathbb{R}^{n_u}$ and outputs $y_t \in \mathbb{R}^{n_y}$, for all $t \in \mathbb{N}_{0}$. Our objective is for the system to \textquotedblleft optimally\textquotedblright \ track a predefined input/output reference $(u_t^r,y_t^r)$ spanning $n_p \geq 1$ operating points of the system while satisfying a set of \emph{polyhedral} input $\mathbb{U} \subseteq \mathbb{R}^{n_u}$ and output $\mathbb{Y} \subseteq \mathbb{R}^{n_y}$ constraints.

Suppose that $\pazocal{S}$ is (at least) \emph{locally controllable}~\cite{sussmann1987general} around the $n_p$ operating points spanned by the reference to be tracked. Moreover, suppose that the information available to undertake this (constrained) tracking task is a set of \emph{noiseless}\footnote{For now, we consider an (ideal) noise-free setting to focus on nonlinearities. Nonetheless, we plan to consider noisy data in future works.} input/output data $\pazocal{D}=\{u_{[0,T_d-1]}^{d},y_{[0,T_d-1]}^{d}\}$, collected by \textquotedblleft sufficiently exploring\textquotedblright \ the neighborhood of such $n_p$ operating points. Given the available information, we plan to achieve the desired tasks via DeePC, exploiting Lasso regularization to handle nonlinearities~\cite{coulson2019data}. Specifically, let us store past inputs/outputs collected while closing the loop into
\begin{equation}\label{eq:z_ini} 
z_{\mathrm{ini},t}=\begin{bmatrix}
u_{[t-\rho,t-1]}^{\top} & y_{[t-\rho,t-1]}^{\top}
\end{bmatrix}^{\!\top\!},~~~t \!\in\! \mathbb{N}_{0},
\end{equation}
where $\rho \geq 0$ is the user-defined \textquotedblleft past\textquotedblright \ horizon%, with $\ell>0$ being the lag of $\pazocal{S}$ (for a discussion on how to choose $\rho$, see~\cite[Remark 2]{BRESCHI2023110961})
. By grouping the available data into the Hankel matrices (see~\eqref{eq:Hankel})
\begin{equation}\label{eq:hankel}
\pazocal{H}_{\rho+L}(u_{[0,T_d-1]}^{d})=\begin{bmatrix}
    U_{P}\\
    U_{F}
\end{bmatrix},~~~\pazocal{H}_{\rho+L}(y_{[0,T_d-1]}^{d})=\begin{bmatrix}
    Y_{P}\\
    Y_{F}
\end{bmatrix},
\end{equation}
where $L\geq 1$ is the prefixed prediction horizon, we can formulate the Lasso-based DeePC problem as:
\begin{subequations}\label{eq:DeePC_lasso}
\begin{align}
&\underset{\substack{g_{t},~u_{f},~y_{f}}}{\mathrm{minimize}}~~J(u_{f},y_{f})+\lambda_g \|g_{t}\|_{1}\\
&\qquad~~\mbox{s.t. }~\begin{bmatrix}
    U_{P}\\
    Y_{P}\\
    U_{F}\\
    Y_{F}
\end{bmatrix}g_{t}=\begin{bmatrix}
z_{\mathrm{ini},t}\\
u_{f}\\
y_{f}
\end{bmatrix},\label{eq:predictive_model}\\
& \qquad \qquad ~~~ u_{k} \in \mathbb{U},~~k \in [0,L-1],\label{eq:constraints1}\\
& \qquad \qquad ~~~ y_{k} \in \mathbb{Y},~~k \in [0,L-1],\label{eq:constraints2}
\end{align}
with $\lambda_g>0$ being a hyperparameter to be tuned, $u_f=u_{[0,L-1]}$, $y_f=y_{[0,L-1]}$, $g_{t} \in \mathbb{R}^{T_d-L-\rho+1}$, and
\begin{equation}\label{eq:cost_DeePC}
J(u_{f},y_{f})\!=\!\!\sum_{k=0}^{L-1} \|y_{k}\!-\!y_{t+k}^{r}\|_{Q\!}^{2}+\|u_{k}\!-\!u_{t+k}^{r}\|_{R}^{2}.
\end{equation}
where $Q \in \mathbb{R}^{n_y \times n_y}$ and $R \in \mathbb{R}^{n_u \times n_u}$ are predefined positive definite weights. This problem is solved at each time instant $t \in \mathbb{N}_{0}$ in a receding horizon fashion, feeding $\pazocal{S}$ with the first optimal input and discarding the rest of the input sequence.
\end{subequations}

Our objective in this work is to analyze whether the Lasso-based DeePC scheme in~\eqref{eq:DeePC_lasso} is \emph{explainable} according to the following definition (adapted from~\cite{Dib2024}). 
\begin{definition}[Explainable Lasso-based DeePC]\label{eq:interpretability_def}
    The DeePC scheme in \eqref{eq:DeePC_lasso} is explainable if its final user can understand how the corresponding optimal input sequence is derived for all possible $z_{\mathrm{ini},t} \in \mathbb{R}^{(n_u+n_y)\rho}$, $\forall t \in \mathbb{N}_{0}$. \hfill $\square$
\end{definition}
We will specialize this broad definition of explainability for the purposes of this paper in the next Subsection (see Definition~\ref{eq:interpretability_loc_def}). %As Definition~\ref{eq:interpretability_def} is a broad definition of explainability, we specialize it for the purposes of this paper in the next Subsection (see Definition~\ref{eq:interpretability_loc_def}).
\begin{remark}[The selector $g_{t}$]
    From here onward, we will often refer to the optimization variable $g_{t} \in \mathbb{R}^{n_g}$, where $n_g=T_d-L-\rho+1$ as the \textquotedblleft selector\textquotedblright. Indeed, also due to the presence of the Lasso regularization \cite{tibshirani1996regression}, such a variable \emph{selects} how much each data column weighs in the design of the optimal control action.
\end{remark}
\subsection{Conditions on the data \& their implications}\label{subsec:setting_data}
Suppose that the data in $\pazocal{D}$ have already been partitioned into $n_p$ subsets $\{\pazocal{D}_{i}\}_{i=1}^{n_p}$, such that
\begin{equation}\label{eq:partitioned_data}
    \bigcup_{i=1}^{n_p}\pazocal{D}_{i}\!=\!\pazocal{D},~~~\pazocal{D}_{i} \!\cap\! \pazocal{D}_{j}\!=\!\emptyset, ~~i\!\neq\! j,~i,j=1,\ldots,n_p,
\end{equation} 
each associated with a neighborhood of one of the operating points spanned by the reference to be tracked. Moreover, suppose that having a \textquotedblleft sufficient exploration\textquotedblright \ of the $n_p$ operating points spanned by the reference implies two conditions on the data comprised in each subset 
\begin{equation}\label{eq:subset}
\pazocal{D}_{i}=\left\{u_{[\tau_{i},\tau_{i}']}^{d},y_{[\tau_{i},\tau_{i}']}^{d}\right\},~~0 \leq \tau_{i}<\tau_i'\leq T_d-1,
\end{equation}
namely
%\begin{itemize}
   % \item[C1.] 
   the inputs $u_{[\tau_{i},\tau_{i}']}$ are PE of order $\rho+L+n$, with $L\geq 1$ being the prefixed prediction horizon of the control scheme to be designed and $\rho\geq 0$, where %$\ell>0$ and 
   $n>0$ is %are the lag and 
   the order of $\pazocal{S}$%, respectively
   . %; \item[C2.] 
   This demand that $\pazocal{D}_i$ should be sufficiently long, i.e., 
\begin{equation}\label{eq:sufficeintly_long}
        T_{d,i}\!=\!\tau_i'\!-\!\tau_i + 1\!\geq\! (n_u\!+\!1)(\rho+\!L+\!n)\!-\!1.%,
    \end{equation}
%\end{itemize} where the second condition is necessary to ensure that C1. is verified. 
%These conditions ultimately imply that the data available in each subset satisfies
%\begin{equation}\label{eq:linear_behavior} 
%\mathrm{colspan}\left(\begin{bmatrix}
%    \pazocal{H}_{\rho+L}\!\left(u_{[\tau_{i},\tau_{i}']}^{d}\right)\\
%    \pazocal{H}_{\rho+L}\!\left(y_{[\tau_{i},\tau_{i}']}^{d}\right)
%\end{bmatrix}\right)=\mathcal{B}_{\rho+L}^{i},
%\end{equation}
%thus allowing one to reconstruct the (linearized) truncated behavior $\mathcal{B}_{\rho+L}^{i}$ of $\pazocal{S}$~\cite{willems2005note} around each operating point. 
Accordingly, we specialize the definition of explainability provided in Definition~\ref{eq:interpretability_def} to the considered setting as follows.
\begin{definition}[Explainability through local behaviors]\label{eq:interpretability_loc_def}
The DeePC scheme in \eqref{eq:DeePC_lasso} is explainable if the corresponding optimal input sequence is generated by using only data from $\pazocal{D}_{i}$ (see~\eqref{eq:subset}) when the initial condition $z_{\mathrm{ini}}$ and the reference to be tracked are in a neighborhood of the $i$-th operating point of $\pazocal{S}$, for all $i=1,\ldots,n_p$. \hfill $\square$
\end{definition}
This definition implies that whenever $z_{\mathrm{ini}}$ and $\{(u_{t+k}^{r},y_{t+k}^{r})\}_{k=0}^{L-1}$ is around a single operating point, $g_t$ should be able to select only the data associated with the \textquotedblleft right\textquotedblright \ (desired) operating condition. Meanwhile, if $z_{\mathrm{ini}}$ or $\{(u_{t+k}^{r},y_{t+k}^{r})\}_{k=0}^{L-1}$ spans multiple operating points on the prediction horizon, then $g_t$ should be able to select the data associated with the \textquotedblleft right\textquotedblright \ operating mode at each prediction step. While it is not structurally possible to have different selectors for different time instants\footnote{This structural change is the object of our ongoing work.}, %Since %the Lasso regularization will not be able to shrink the elements of $g_t$ exactly to zero and, 
%structurally %, we cannot have different selectors for different time instants\footnote{This structural change is the object of our ongoing work.},
explainability according to Definition~\ref{eq:interpretability_loc_def} could still be achieved if the data associated to the \textquotedblleft right\textquotedblright \ operating modes would be weighted more (through $g_t$) than the others. 

Under these conditions, the objective of this paper is ultimately to answer the following question: \emph{Is Lasso-based DeePC explainable according to Definition~\ref{eq:interpretability_loc_def}?}

\section{Lasso-based DeePC: an explicit solution}\label{sec:lasso_deepc}
A %s a 
first step toward analyzing the explainability of Lasso-based DeePC in the sense of Definition~\ref{eq:interpretability_loc_def}, could be the derivation of its explicit solution. However, due to the features of the Hankel data matrices (that are only full row rank)  we need an additional $L^2$ regularization on $g_t$ to guarantee the uniqueness of the solution of the DeePC problem (see \cite[Lemma 1]{Breschi2023}). We thus consider the modified Lasso-based DeePC
\begin{subequations}\label{eq:DeePC_lasso3}
\begin{align}
&\underset{\substack{g_{t},~u_{f},~y_{f}}}{\mathrm{minimize}}~~J(u_{f},y_{f})+\lambda_g \|g_t\|_{1}+\lambda_{2}\|g_t\|_2^2\label{eq:Cost_DeePC3}\\
&\qquad~~\mbox{s.t. }~\begin{bmatrix}
    U_{P}\\
    Y_{P}\\
    U_{F}\\
    Y_{F}
\end{bmatrix}g_{t}=\begin{bmatrix}
z_{\mathrm{ini},t}\\
u_{f}\\
y_{f}
\end{bmatrix},\\
& \qquad \qquad ~~~ u_{k} \in \mathbb{U},~~k \in [0,L-1],\label{eq:constraints1_2}\\
& \qquad \qquad ~~~ y_{k} \in \mathbb{Y},~~k \in [0,L-1],\label{eq:constraints2_2}
\end{align}
\end{subequations}
where $\lambda_2>0$ is here assumed to be infinitesimally small.

We then derive the explicit solution of \eqref{eq:DeePC_lasso3}. To this end, let us introduce $\boldsymbol{y}^{r}=y_{[t,t+L-1]}^{r}$ and $\boldsymbol{u}^{r}=u_{[t,t+L-1]}^{r}$, simplify the notation by dropping the dependence on $t$ in \eqref{eq:DeePC_lasso} and defining
\begin{equation}
    Z_P=\begin{bmatrix}
        U_P\\
        Y_P
    \end{bmatrix},
\end{equation}
so that the predictor in \eqref{eq:predictive_model} becomes
\begin{equation}
    \begin{bmatrix}
        Z_P\\
        U_F\\
        Y_F
    \end{bmatrix}g=\begin{bmatrix}
        z_{\mathrm{ini}}\\
        u_{f}\\
        y_{f}
    \end{bmatrix}.
\end{equation}
Moreover, let us rewrite the polyhedral input and output constraints explicitly as inequalities on $g$ (see~\cite{Breschi2023}), i.e.,
\begin{equation}\label{eq:polyhedral_constraints}
    \pazocal{G}_{F}g\leq \gamma,
\end{equation}
%being 
with $\pazocal{G}_{F}\in\mathbb{R}^{n_c\times n_g}$ and $\gamma\in\mathbb{R}$ describing the $n_c$ polyhedral constraints on the selector stemming from \eqref{eq:constraints1}-\eqref{eq:constraints2}. Following the same steps of \cite{he2011lasso}, let us rewrite the optimization variable $g$ as
\begin{subequations}
    \begin{equation}\label{eq:decomposition}
        g=g^{+}-g^{-},
    \end{equation}
such that $g^{+},g^{-} \in \mathbb{R}^{n_g}$ are both non-negative, i.e., $g_{k}^{+}\geq 0$ and $g_{k}^{-}\geq 0$ for all $k=1,\ldots,n_g$, and the cost in \eqref{eq:cost_DeePC} can be rewritten as
\begin{equation}\label{eq:new_cost_DeePC}
    J(g^{+},g^{-})\!=\!\|Y_F(g^{+\!}\!-\!g^{-})\!-\!\boldsymbol{y}^{r}\|_{\pazocal{Q}}^{2}+\|U_F(g^{+\!}\!-\!g^{-})\!-\!\boldsymbol{u}^{r}\|_{\pazocal{R}}^{2},
\end{equation}
where $\pazocal{Q}=\mathrm{diag}(Q,\ldots,Q)$ and $\pazocal{R}=\mathrm{diag}(R,\ldots,R)$.
\end{subequations}
According to the previous definition and our assumption on the input/output constraints, \eqref{eq:DeePC_lasso} thus becomes
\begin{subequations}\label{eq:DeePC_lasso2}
\begin{align}
&\underset{g^{+\!},~g^{-}}{\mathrm{minimize}}~J(g^{+\!},g^{-})\!+\!\lambda_g\mathds{1}_{n_g}^{\top\!}(g^{+\!}+\!g^{-})\!+\!\lambda_{2}\| g^{+\!}\!-\!g^{-}\|_2^2\\
&\qquad~~\mbox{s.t. }~
    Z_{P}(g^{+}-g^{-})=
z_{\mathrm{ini}},\label{eq:predictive_model2}\\
& \qquad \qquad~~ \pazocal{G}_{F}(g^{+}-g^{-})\leq \gamma,\\
& \qquad \qquad~~ g_{k}^{+} \geq 0,~~g_{k}^{-}\geq 0,~~k\!=\!1,\ldots,n_g.
\end{align}
\end{subequations}
Relying on this reformulation, we can now consider its associated Lagrangian, namely 
\begin{align}\label{eq:Lagrangian}
    \nonumber &\mathcal{L}(g^{+},g^{-\!},\beta,\mu,\mu^{+},\mu^{-})=J(g^{+},g^{-})+\lambda_g\mathds{1}_{n_g}^{\top}(g^{+\!}+g^{-})\\
    \nonumber &\quad \quad +\lambda_{2}\|g^{+}-g^{-}\|_{2}^{2}+\beta^{\top}[Z_{P}(g^{+\!}\!-\!g^{-})\!-\!z_{\mathrm{ini}}]\!\\
    &\quad\quad +\mu^{\!\top}\!\left[\pazocal{G}_{F}(g^{+\!}\!-\!g^{-})-\!\gamma\right]-\!(\mu^{+})^{\!\top}g^{+\!}-\!(\mu^{-})^{\!\top}g^{-},
\end{align}
where $\beta \in \mathbb{R}^{\rho(n_u+n_y)}$, 
$\mu \in \mathbb{R}^{n_c}$, $\mu^{+}\in\mathbb{R}^{n_g}$ and $\mu^{-}\in\mathbb{R}^{n_g}$ are the Lagrangian multipliers associated with the equality and inequality constraints in~\eqref{eq:DeePC_lasso2}, respectively. %By defining it, w
We can then  
write the Karush-Kuhn-Tucker (KKT) conditions for each component of the optimization variables $g^{+}$ and $g^{-}$ in \eqref{eq:DeePC_lasso2} as follows:
\begin{subequations}\label{eq:KKTs}  
\begin{align}
    \nonumber & 2Y_{F}^{\top\!}\pazocal{Q}\varepsilon^{y}(g^{\star})+2U_{F}^{\top}\pazocal{R}\varepsilon^{u}(g^{\star})+\lambda_g\mathds{1}+\!\\
    &\qquad\qquad ~~~+2\lambda_2g^{\star}+Z_{P}^{\top}\beta+\pazocal{G}_{F}^{\top}\mu-\!\mu^{+\!}\!=\!0,\label{eq:KKT1}\\
    \nonumber & \!-\!2Y_{F}^{\top\!}\pazocal{Q}\varepsilon^{y}(g^{\star})-2U_{F}^{\top}\pazocal{R}\varepsilon^{u}(g^{\star})\!+\!\lambda_g\mathds{1}+\\
    &\qquad\qquad ~~~~-2\lambda_2g^{\star}-Z_{P}^{\top}\beta-\pazocal{G}_{F}^{\top}\mu-\mu^{-\!\!}\!=\!0,\label{eq:KKT2}\\
    & g_{k}^{\star,+} \geq 0, ~~ g_{k}^{\star,-} \geq 0,~~k=1,\ldots,n_g, \label{eq:KKT3}\\
     & \mu_{k}^{+} \geq 0, ~~ \mu_{k}^{-} \geq 0,~~k=1,\ldots,n_g, \label{eq:KKT4}\\
    & \mu_{k}^{+}g_{k}^{\star,+}=0,~~\mu_{k}^{-}g_{k}^{\star,-}=0,~~k=1,\ldots,n_g, \label{eq:KKT5}\\
    & Z_{P} g^{\star}-z_{\mathrm{ini}}=0,\label{eq:KKT6}\\
    & \pazocal{G}_{F}g^{\star} \leq \gamma,\label{eq:KKT7}\\
    & \mu^{\top}[\pazocal{G}_{F}g^{\star}-\gamma]=0,\label{eq:KKT8}\\
    & \mu \geq 0,\label{eq:KKT9}
    \end{align}
\end{subequations}
where $\varepsilon^{y}(g^{\star})=Y_{F}g^{\star}\!\!-\boldsymbol{y}^{r}$ and $\varepsilon^{u}(g^{\star})=U_{F}g^{\star}\!\!-\boldsymbol{u}^{r}$, we do not explicitly mention the dimension of $\mathds{1}$ (see \eqref{eq:Lagrangian}) and we have exploited \eqref{eq:decomposition} to simplify the notation.

By following the same steps of \cite{Breschi2023}, let us consider a generic set of active constraints and split $\mu$ as $\mu=\begin{bmatrix}
    \tilde{\mu}^{\top} \!&\! \hat{\mu}^{\top}
\end{bmatrix}^\top$ such that
\begin{subequations}\label{eq:constraints_partition}
    \begin{align}
        & \hat{\pazocal{G}}_{F}g^{\star}-\hat{\gamma}<0,~~\mbox{ and }~~\hat{\mu}\!=\!0,\label{eq:inactive}\\
        & \tilde{\pazocal{G}}_{F}g^{\star}-\tilde{\gamma}=0,~~\mbox{ and }~~\tilde{\mu}\!>\!0,\label{eq:active}
    \end{align}
\end{subequations}
by complementarity slackness (see~\eqref{eq:KKT8}-\eqref{eq:KKT9}). While the inequalities in \eqref{eq:inactive} dictate the polyhedral region where the solution we derive next is valid, \eqref{eq:active} can be combined with \eqref{eq:KKT6} to obtain
\begin{equation}\label{eq:equality constraints}
    \underbrace{\begin{bmatrix}
        Z_{P}\\
        \tilde{\pazocal{G}}_{F}
    \end{bmatrix}}_{=\tilde{G}}g^{\star}-\underbrace{\begin{bmatrix}
        z_{\mathrm{ini}}\\
        \tilde{\gamma}
    \end{bmatrix}}_{=\tilde{b}}=0.
\end{equation}
Let us now make the following assumption on $\tilde{G}$ (see \cite[Assumption 2]{Breschi2023}).
\begin{assumption}\label{assump:indep}
    The rows of $\tilde{G}$ in \eqref{eq:equality constraints} are linearly independent. 
\end{assumption}
Summing and subtracting \eqref{eq:KKT1} and \eqref{eq:KKT2} we respectively get
\begin{subequations}
    \begin{align}
        & 2\lambda_g\mathds{1}=\mu^{+}+\mu^{-},\label{eq:KKT1bis}\\
        \nonumber & 2Y_{F}^{\top\!}\pazocal{Q}\varepsilon^{y}(g^{\star})\!+\!2U_{F}^{\top}\pazocal{R}\varepsilon^{u}(g^{\star})\!+\!4\lambda_2g^{\star}+\\
        &\qquad \qquad \qquad \qquad +\tilde{G}^\top\delta\!-\!\frac{1}{2}(\mu^{+\!\!}-\!\mu^{-})\!=\!0,\label{eq:KKT2bis}
    \end{align}
\end{subequations}
where $\delta=\begin{bmatrix}
\beta^{\top} \!\!&\!\! \tilde{\mu}^{\!\top}
\end{bmatrix}^{\top}$. By defining $W_F=2Y_{F}^{\top\!}\pazocal{Q}Y_F+2U_{F}^{\top}\pazocal{R}U_F+4\lambda_2 I$ (that is positive definite and, thus, invertible by definition) and $c_{F}=2Y_{F}^{\top\!}\pazocal{Q}\boldsymbol{y}^{r}+2U_{F}^{\top}\pazocal{R}\boldsymbol{u}^{r}$, manipulating the latter equation we get:
\begin{equation}\label{eq:g_partial}
    g^{\star}=-W_{F}^{-1}\left[\tilde{G}^\top\delta-\frac{1}{2}(\mu^{+\!\!}-\!\mu^{-})-c_F\right].
\end{equation}
By replacing this into \eqref{eq:equality constraints}, we can thus find the explicit form of $\delta$, i.e.,
\begin{equation}\label{eq:delta}
    \delta \!=\! (\tilde{G}W_F^{-1}\tilde{G}^{\top})^{-1\!\!}\left[\frac{1}{2}\tilde{G}W_F^{-1}(\mu^{+\!\!\!}-\!\mu^{-})\!+\!\tilde{G}W_F^{-1}c_{F}\!-\!\tilde{b}\right]\!,
\end{equation}
which, when replaced in \eqref{eq:g_partial} leads to an expression of $g^{\star}$ that depends only on data and on the Lagrange multipliers $\mu^{+}$ and $\mu^{-}$. By following the same reasoning of \cite{he2011lasso}, let us now distinguish the following cases:
\begin{enumerate}
    \item for all $k \in \{1,\ldots,n_g\}$ such that $g_{k}^{\star,+}>0$, %then
    we have $\mu_{k}^{+}=g_{k}^{\star,-}=0$ and $\mu_{k}^{-}=2\lambda_g$;
     \item for all $k \in \{1,\ldots,n_g\}$ such that $g_{k}^{\star,-}>0$, %then
     it holds $\mu_{k}^{-}=g_{k}^{\star,+}=0$ and $\mu_{k}^{+}=2\lambda_g$;
     \item for all $k \in \{1,\ldots,n_g\}$ such that $g_{k}^{\star,+}=g_{k}^{\star,-}=0$, %then
     we have $\mu_{k}^{+}+\mu_{k}^{-}=2\lambda_g$,
\end{enumerate}
which stem from \eqref{eq:KKT3},~\eqref{eq:KKT4}, \eqref{eq:KKT5}, and \eqref{eq:KKT1bis}. Note that $g_{k}^{\star,+}>0$ and $g_{k}^{\star,-}>0$ cannot be verified simultaneously for~\eqref{eq:KKT1bis} as $\lambda_g>0$. 

Accordingly, the following holds
\begin{equation}\label{eq:multipliers_values}
    \mu_{k}^{+}-\mu_{k}^{-}=\begin{cases}
        -2\lambda_g, &\mbox{ if } g_{k}^{\star,+}>0,\\
        2\lambda_g, &\mbox{ if } g_{k}^{\star,-}>0,\\
        c_{k}^{\mu}(z_{\mathrm{ini}}) &\mbox{ if } g_{k}^{\star,+\!\!}=g_{k}^{\star,-\!\!}=0,
    \end{cases}
\end{equation}
where $c_{k}^{\mu}(z_{\mathrm{ini}})$ is a constant depending on the data\footnote{An explicit expression for this constant can be obtained by replacing \eqref{eq:delta} into \eqref{eq:g_partial}, considering the $k$-th component of the solution and imposing $g^{\star}=0$ and $\mu_{k}^{+}+\mu_{k}^{-}=2\lambda_g$.}, the initial conditions $z_{\mathrm{ini}}$ of the %Lasso-based 
DeePC problem, as well as the features of its cost in \eqref{eq:Cost_DeePC3}. %ts features. 
This result allows us to show that when \eqref{eq:inactive} and \eqref{eq:active} hold, then $g^{\star}=\tilde{\pazocal{F}}z_{\mathrm{ini}}+\tilde{f}$, where both $\tilde{\pazocal{F}}$ and $\tilde{f}$ are %driven 
determined by the available data and the features of the DeePC problem.

It ultimately follows that the explicit solution of (modified) Lasso-based DeePC in \eqref{eq:DeePC_lasso3} is a Piecewise Affine (PWA) law, i.e.,
\begin{equation}\label{eq:PWAlaw}
    g^{\star}=\begin{cases}
    \pazocal{F}_{1}z_{\mathrm{ini}}+f_{1},~&\mbox{ if }~\pazocal{P}_{1}z_{\mathrm{ini}}\leq p_{1},\\
    \vdots \\
    \pazocal{F}_{M}z_{\mathrm{ini}}+f_{M},~&\mbox{ if }~\pazocal{P}_{M}z_{\mathrm{ini}}\leq p_{M},
    \end{cases}
\end{equation}
where $\pazocal{P}_{m}\in\mathbb{R}^{n_m\times\rho(n_u+n_y)}$ and $p_{m}\in\mathbb{R}^{n_m}$, $m=1,\dots,M$, $M\geq 1$ are dictated by the number of possible combinations of active constraints, while the fact that the partition is driven by $z_{\mathrm{ini}}$ can be shown by merging \eqref{eq:g_partial}, \eqref{eq:delta} and \eqref{eq:multipliers_values} and replacing the resulting expression for $g^{\star}$ into \eqref{eq:inactive}. 
\begin{proposition}[Explicit solution%Lasso-DeePC
]
\label{prop:explicit_lasso} Let $\lambda_g>0$ and $\lambda_2>0$, with $\lambda_2$ infinitesimally small. Under Assumption~\ref{assump:indep}, the explicit solution of \eqref{eq:DeePC_lasso3} is a PWA law in the initial conditions $z_{\mathrm{ini}}$ (see \eqref{eq:PWAlaw}), where all the available data, the references and the weights characterizing the cost function in \eqref{eq:cost_DeePC}  and the regularization parameters $\lambda_g$ and %(an infinitesimally small) 
$\lambda_2$
drive the gain of the local affine control laws and its partitions.   
\end{proposition}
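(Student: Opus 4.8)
\noindent\emph{Proof strategy.}~The plan is to exploit the strict convexity induced by the $L^2$ term to characterize the unique minimizer of \eqref{eq:DeePC_lasso3} through its KKT system, obtain a closed form for every fixed combination of active constraints and sign pattern of the selector, and finally glue these pieces into the PWA law \eqref{eq:PWAlaw}. Since $\lambda_2>0$, the matrix $W_F=2Y_F^\top\pazocal{Q}Y_F+2U_F^\top\pazocal{R}U_F+4\lambda_2 I$ is positive definite, so the reformulated problem \eqref{eq:DeePC_lasso2} is a strictly convex quadratic program; its minimizer is unique (consistently with \cite[Lemma 1]{Breschi2023}) and is fully characterized by the necessary and sufficient KKT conditions \eqref{eq:KKTs}.

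First I would remove the nonsmoothness of the $\ell_1$ term via the split $g=g^+-g^-$ of \eqref{eq:decomposition}, obtaining the smooth program \eqref{eq:DeePC_lasso2} with nonnegativity constraints $g^+\geq 0$, $g^-\geq 0$, and write the Lagrangian \eqref{eq:Lagrangian}. Fixing a candidate active set of the polyhedral constraints \eqref{eq:polyhedral_constraints} and partitioning $\mu=[\tilde\mu^\top~\hat\mu^\top]^\top$ as in \eqref{eq:constraints_partition}, the active rows combine with the equality predictor \eqref{eq:KKT6} into the stacked equality $\tilde{G}g^\star=\tilde{b}$ of \eqref{eq:equality constraints}. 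Adding and subtracting the stationarity conditions \eqref{eq:KKT1}--\eqref{eq:KKT2} then decouples the multiplier sum $\mu^++\mu^-=2\lambda_g\mathds{1}$ from the remaining equation \eqref{eq:KKT2bis}, which, using the invertibility of $W_F$, I would solve for $g^\star$ as in \eqref{eq:g_partial}; substituting back into \eqref{eq:equality constraints} and invoking Assumption~\ref{assump:indep} to invert $\tilde{G}W_F^{-1}\tilde{G}^\top$ yields the multiplier block $\delta$ in \eqref{eq:delta}.

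The remaining ingredient is the sign pattern of the selector. Following \cite{he2011lasso}, I would distinguish the three mutually exclusive cases $g_k^{\star,+}>0$, $g_k^{\star,-}>0$, and $g_k^{\star,+}=g_k^{\star,-}=0$ (the configuration $g_k^{\star,+}>0$ and $g_k^{\star,-}>0$ being ruled out by \eqref{eq:KKT1bis} since $\lambda_g>0$), which pin down $\mu_k^+-\mu_k^-$ through \eqref{eq:multipliers_values}. For a frozen active set and sign pattern, $\mu^+-\mu^-$ is known up to the affine-in-$z_{\mathrm{ini}}$ terms $c_k^\mu(z_{\mathrm{ini}})$ on the zero support; back-substituting into \eqref{eq:g_partial}--\eqref{eq:delta} shows that $g^\star$ depends affinely on $z_{\mathrm{ini}}$, i.e., $g^\star=\tilde{\pazocal{F}}z_{\mathrm{ini}}+\tilde{f}$, with $\tilde{\pazocal{F}}$ and $\tilde{f}$ assembled from the data matrices $U_F,Y_F,Z_P$, the references $\boldsymbol{u}^r,\boldsymbol{y}^r$, the weights $\pazocal{Q},\pazocal{R}$, and $\lambda_g,\lambda_2$, exactly as claimed.

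Finally I would assemble the global map. The validity conditions of each branch---the strict inactivity \eqref{eq:inactive}, the dual feasibility \eqref{eq:KKT4} and \eqref{eq:KKT9}, and the sign conditions \eqref{eq:KKT3} underlying the case split---become, after substituting the affine expressions for $g^\star$ and the multipliers, affine inequalities in $z_{\mathrm{ini}}$, hence define a polyhedron $\pazocal{P}_m z_{\mathrm{ini}}\leq p_m$. As there are only finitely many combinations of active constraints and sign patterns, enumerating them produces the finitely many pieces of \eqref{eq:PWAlaw}. The main obstacle is precisely this bookkeeping step: one must verify that each admissible combination is \emph{self-consistent} (the $g^\star$ produced under an assumed support indeed realizes that support and respects the dual-feasibility bounds), so that uniqueness of the minimizer forces the induced regions to be non-overlapping and to cover $\mathbb{R}^{\rho(n_u+n_y)}$, yielding a genuine polyhedral partition rather than an unstructured list of candidate branches.
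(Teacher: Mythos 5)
Your proposal is correct and follows essentially the same route as the paper, which proves the proposition by exactly the chain you describe: the split $g=g^{+}-g^{-}$ in \eqref{eq:decomposition}, the KKT system \eqref{eq:KKTs}, the active-set partition \eqref{eq:constraints_partition} with Assumption~\ref{assump:indep} to invert $\tilde{G}W_F^{-1}\tilde{G}^{\top}$, the sign-pattern case distinction \eqref{eq:multipliers_values}, and the resulting affine dependence on $z_{\mathrm{ini}}$ assembled into \eqref{eq:PWAlaw}. Your closing remark on verifying self-consistency of each active-set/sign-pattern combination is a sensible extra check that the paper leaves implicit, but it does not change the argument.
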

\begin{proof}
The proof follows from \eqref{eq:decomposition}-\eqref{eq:multipliers_values}.
\end{proof}
By relying on the expressions in \eqref{eq:g_partial}, \eqref{eq:delta} and \eqref{eq:multipliers_values}, it can be shown that the value of each component of the optimal selector resulting from the considered modified Lasso-based DeePC is driven by $(i)$ the magnitude of \emph{all} data, $(ii)$ the weights and references in \eqref{eq:cost_DeePC}, as well as the features of the polyhedral constraints in \eqref{eq:constraints1}-\eqref{eq:constraints2}, and $(iii)$ the %Lasso penalty 
hyperparameters $\lambda_g$ and $\lambda_2$, with the first solely magnifying the quantities above and the second (as it is assumed to be small) only guaranteeing uniqueness while not considerably affecting the solution. % (solely magnifying the quantities above).
To showcase this, let us consider the case in which none of the components of $g^{\star}$ is driven to zero for the considered choice of $\lambda_g$. In this case, the optimal solution for the fixed set of active and inactive constraints dictated by \eqref{eq:constraints_partition} is given by
\begin{equation}
    g^{\star\!}\!=\!-\lambda_g W_F^{-1}\!\left\{\tilde{G}^{\top}(\tilde{G}W_F^{-1}\tilde{G}^{\top})^{-1}\tilde{G}W_F^{-1\!}-\!I\right\}e+c_{e},
\end{equation}
with $c_e\!=\!W_{F}^{-1\!}\left\{\!\tilde{G}^{\top\!}(\tilde{G}W_F^{-1}\tilde{G}^{\top})^{-1}(\tilde{G}W_F^{-1}c_{F}\!-\!\tilde{b})\!-\!c_{F}\!\right\}$ and $e_{k}=-1$ if $g_{k}^{+}>0$ or it is $1$ if $g_{k}^{-}>0$, for $k=1,\ldots,n_g$. The features of the explicit solution of the modified Lasso-based DeePC in \eqref{eq:DeePC_lasso3} ultimately indicate that there is no way to prioritize the use of subsets of data (e.g., weighting them more than others) if not because of differences in their magnitudes. This result will be ultimately key to discussing the explainability (or lack thereof) of Lasso-based DeePC according to Definition~\ref{eq:interpretability_loc_def}.
\section{Explainability \& Lasso-DeePC}\label{sec:interpretability}
While the general formulation of Lasso-DeePC in \eqref{eq:DeePC_lasso} does not exploit (nor explicitly account for) the grouping of data assumed in our setting (Section~\ref{subsec:setting_data}), we now explicitly consider it for the analysis of the explainability of \eqref{eq:DeePC_lasso} according to Definition~\ref{eq:interpretability_loc_def}. To this end, we first discuss how operating conditions can be highlighted either through experiment design or clustering the collected data, using these matrices to reformulate \eqref{eq:DeePC_lasso} and analyzing the explainability of the resulting (explicit) solution.
\subsection{Strategies to highlight operating conditions in data}
Building a predictor that is explainable in the sense of Definition~\ref{eq:interpretability_loc_def}, i.e., a predictor where local behaviors are distinguished and distinguishable by the user, is key to explaining the solution of Lasso-based DeePC. Several data structures can be considered to achieve this result, depending on the possibility of designing tailored \emph{data collections} to accomplish this goal.
\paragraph*{Mosaic matrices} A mosaic matrix is a data structure that can be used to build the data-driven predictor in \eqref{eq:predictive_model} while naturally mirroring local behaviors. Indeed, grouping the data associated with the different subsets $\pazocal{D}_{i}$, with $i=1,\ldots,n_p$ (see \eqref{eq:subset}), we can define the predictor based on the mosaic matrix as
\begin{equation}\label{eq:mosaic_predictor}
    \begin{bmatrix}
        \pazocal{M}_{\rho+L}(u_{[0,T_d-1]}^{d})\\
        \pazocal{M}_{\rho+L}(u_{[0,T_d-1]}^{d})
    \end{bmatrix}g_{t}=\begin{bmatrix}
        z_{\mathrm{ini},t}\\
        u_{f}\\
        y_f
    \end{bmatrix}
\end{equation}
where
\begin{subequations}
\begin{align}
    &  \pazocal{M}_{\rho+L}(u_{[0,T_{d,1}-1]}^{d})\!=\!\!\begin{bmatrix}
        \pazocal{H}_{\rho+L}(u_{[\tau_{1},\tau_{1}']}^{d}) \!\!\!\!\!\!& \cdots &\!\!\!\!\! \pazocal{H}_{\rho+L}(u_{[\tau_{n_p},\tau_{n_p}']}^{d})
    \end{bmatrix}\!,\\
    &  \pazocal{M}_{\rho+L}(y_{[0,T_d-1]}^{d})\!=\!\!\begin{bmatrix}
        \pazocal{H}_{\rho+L}(y_{[\tau_{1},\tau_{1}']}^{d}) \!\!\!\!\!\!& \cdots &\!\!\!\!\!\pazocal{H}_{\rho+L}(y_{[\tau_{n_p},\tau_{n_p}']}^{d})
    \end{bmatrix}\!.
\end{align}
\end{subequations}
Note that, in this case, the dimension of the selector becomes $n_g=T_d-n_p(L+\rho-1)$. 

These matrices can be readily constructed if separate experiments are carried out by exciting the system around each operating condition, naturally providing a separation between data associated with different operating modes, as in \cite{verdult2004subspace} and \cite[Section E.1]{faye2024computationally}. Alternatively, clustering techniques (e.g., k-means \cite{hartigan1979k}) can be used to assign each data point in $\pazocal{D}$ to an operating mode and, then, construct the Hankel matrix.  
\paragraph*{Explainable Hankel matrices} Whenever tailored experiments can be performed to cyclically make the system operate around each operating condition of interest for enough steps (see \eqref{eq:sufficeintly_long}), one obtains a trajectory with (uninterrupted) subsequences associated with the different $n_p$ operating points (OPs) of the system. This choice in experiment design naturally leads to explainable Hankel matrices, i.e., where one can isolate blocks of data associated with a single operating condition. Note that some blocks in the explainable Hankel matrices will nonetheless contain data collected when transitioning from one operating condition to the other, \textquotedblleft explaining\textquotedblright \ also these intermediate behaviors.    
\paragraph*{Explainable Page matrices} Similarly to explainable Hankel matrices, explainable Page matrices can also be considered if one can perform an experiment exciting the system around each operating mode in a cyclic way. As the entries in each column of a Page matrix are not repeated (see, e.g.,~\cite{iannelli2021design}), longer experiments are nonetheless needed to construct this kind of (explainable) data matrix.
\subsection{Lasso-DeePC with grouped matrices \& its explainability}
Let us assume that the data have been grouped into a Mosaic matrix, an explainable Hankel, or an explainable Page matrix. Consequently, the predictor in \eqref{eq:predictive_model} can be recast as
\begin{equation}\label{eq:predictor_grouped}
    \sum_{j=1}^{\nu}\begin{bmatrix}
        \pazocal{H}_{\rho+L}(u_{[\tau_{j},\tau_{j}']}^{d})\\
        \pazocal{H}_{\rho+L}(y_{[\tau_{j},\tau_{j}']}^{d})
    \end{bmatrix}g_{t}^{j}=\sum_{j=1}^{\nu}\begin{bmatrix}
        Z_P^j\\
        U_F^j\\
        Y_F^j
    \end{bmatrix}g^j_{t}=\begin{bmatrix}
        z_{\mathrm{ini},t}\\
        u_f\\
        y_f
    \end{bmatrix}\!,
\end{equation}
where $\nu=n_{p}$ if the data are organized into a Mosaic matrix while it corresponds to the number of sub-Hankels/sub-Pages associated with the different modes (or the transitions among them) that are induced by experiment design when considering an explainable Hankel/Page matrices. Accordingly, $g^j \in \mathbb{R}^{n_{j}}$, for $j=1,\ldots,\nu$, are the components of the decision variable associated with the $j$-th group in the data, with $\sum_{j=1}^{\nu}n_j=n_g$. 

Despite the grouping, we can still follow the same steps carried out in Section~\ref{sec:lasso_deepc} to obtain an explicit solution of the modified Lasso-DeePC in \eqref{eq:DeePC_lasso3} with the predictor in~\eqref{eq:predictor_grouped}. This procedure\footnote{We do not report the explicit steps, since they exactly unfold as those reported in Section~\ref{sec:lasso_deepc}.} allows us to prove that 
\begin{equation}\label{eq:explicit_grouped}
    g^{j,\star}=\begin{cases}
    \pazocal{F}_{1}^{j}z_{\mathrm{ini}}+f_{1}^{j},~&\mbox{ if }~\pazocal{P}_{1}^{j}z_{\mathrm{ini}}\leq p_{1}^{j},\\
    \vdots \\
    \pazocal{F}_{M}^{j}z_{\mathrm{ini}}+f_{M}^{j},~&\mbox{ if }~\pazocal{P}_{M}^{j}z_{\mathrm{ini}}\leq p_{M}^{j},
    \end{cases}
\end{equation}
where $\pazocal{F}_{m}^{j}$, $f_{m}^{j}$, $\pazocal{P}_{m}^{j}$ and $p_{m}^{j}$ depend on all data and the features of the %Lasso-
DeePC problem (i.e., the polyhedral constraints and the cost and regularization weights) irrespective of the grouping embodied by the adopted explainable data structure, for all $j=1,\ldots,\nu$ and $m=1,\ldots,M$. This result is formalized in the following corollary.
\begin{corollary}[Explicit solution \& grouped matrices] 
Let $\lambda_g>0$ and $\lambda_2>0$, with $\lambda_2$ be infinitesimally small. Under Assumption~\ref{assump:indep}, the explicit solution of \eqref{eq:DeePC_lasso3} with grouped data is a PWA law in the initial conditions $z_{\mathrm{ini}}$ (see \eqref{eq:explicit_grouped}), where all the available data drive the gain of the local affine control laws and its partition, the references and the weights characterizing the cost function in \eqref{eq:cost_DeePC} and the regularization parameters.% $\lambda_g$.   
\end{corollary}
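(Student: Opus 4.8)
The plan is to reduce the grouped problem to the ungrouped one treated in Section~\ref{sec:lasso_deepc}, so that Proposition~\ref{prop:explicit_lasso} can be invoked almost verbatim. First I would observe that the grouped predictor in \eqref{eq:predictor_grouped} is nothing but a re-indexing of \eqref{eq:predictive_model}: stacking the partial selectors into $g=[(g^{1})^{\top}\ \cdots\ (g^{\nu})^{\top}]^{\top}\in\mathbb{R}^{n_g}$ and concatenating the sub-blocks horizontally into $Z_P=[Z_P^{1}\ \cdots\ Z_P^{\nu}]$, $U_F=[U_F^{1}\ \cdots\ U_F^{\nu}]$ and $Y_F=[Y_F^{1}\ \cdots\ Y_F^{\nu}]$ recovers exactly the constraint $[Z_P^{\top}\ U_F^{\top}\ Y_F^{\top}]^{\top}g=[z_{\mathrm{ini}}^{\top}\ u_f^{\top}\ y_f^{\top}]^{\top}$. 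Since $\sum_{j=1}^{\nu}n_j=n_g$, the dimension of the stacked selector is unchanged, and the only difference with respect to Section~\ref{sec:lasso_deepc} is the ordering of the columns of the data matrices.

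Second, I would check that the cost and the two regularizers are invariant under this concatenation. The tracking cost in \eqref{eq:cost_DeePC} depends on $u_f,y_f$ only, hence on $U_F g$ and $Y_F g$, which are insensitive to the grouping; the $\ell_1$ and $\ell_2$ penalties are separable, i.e., $\|g\|_1=\sum_j\|g^{j}\|_1$ and $\|g\|_2^2=\sum_j\|g^{j}\|_2^2$, so they coincide with the penalties of \eqref{eq:DeePC_lasso3}. Likewise, the polyhedral constraints recast as $\pazocal{G}_F g\leq\gamma$ in the same form as \eqref{eq:polyhedral_constraints}. Consequently, the grouped problem is \emph{the same} optimization program as \eqref{eq:DeePC_lasso3}, up to a relabeling of the decision variable, and no per-group constraint or per-group penalty is introduced that could decouple the $g^{j}$'s from one another.

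Third, with the two problems identified, the whole KKT derivation of Section~\ref{sec:lasso_deepc} transfers unchanged. The matrix $W_F=2Y_F^{\top}\pazocal{Q}Y_F+2U_F^{\top}\pazocal{R}U_F+4\lambda_2 I$ remains positive definite (hence invertible) because $\lambda_2>0$, Assumption~\ref{assump:indep} still guarantees invertibility of $\tilde{G}W_F^{-1}\tilde{G}^{\top}$ for each admissible active set, and the expressions \eqref{eq:g_partial}, \eqref{eq:delta} and \eqref{eq:multipliers_values} hold verbatim with the concatenated data. Reading off $g^{\star}=\tilde{\pazocal{F}}z_{\mathrm{ini}}+\tilde{f}$ on each region and partitioning $g^{\star}$ back into its $\nu$ blocks $g^{j,\star}$ then yields the claimed PWA law \eqref{eq:explicit_grouped}, whose gains $\pazocal{F}_m^{j},f_m^{j}$ and regions $\pazocal{P}_m^{j},p_m^{j}$ are obtained by extracting the corresponding rows of the quantities defining \eqref{eq:PWAlaw}.

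Finally, and this is the point that actually matters for the explainability discussion, I would argue that each $g^{j,\star}$ still depends on \emph{all} the data, not only on the $j$-th group. This is not routine bookkeeping but the conceptual crux, and it is where I expect the main obstacle to lie: one must rule out that any choice of explainable data structure (mosaic, explainable Hankel, or explainable Page) could break the coupling. The coupling enters through $W_F$, which mixes every group via the full products $Y_F^{\top}\pazocal{Q}Y_F$ and $U_F^{\top}\pazocal{R}U_F$, and through $\tilde{G}$, which stacks the full $Z_P$; hence $W_F^{-1}$ and $(\tilde{G}W_F^{-1}\tilde{G}^{\top})^{-1}$ are generically dense, so the rows of $\tilde{\pazocal{F}}$ selecting $g^{j,\star}$ inherit dependence on data from every group. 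Establishing that neither $W_F$ becomes block-diagonal nor $\tilde{G}$ decouples across groups in general is precisely what certifies that the Lasso regularization alone cannot localize the solution, thereby confirming the failure of explainability in the sense of Definition~\ref{eq:interpretability_loc_def}.
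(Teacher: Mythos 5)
Your proposal is correct and follows essentially the same route as the paper, which proves the corollary by noting that the steps from \eqref{eq:decomposition} to \eqref{eq:multipliers_values} carry over verbatim once the grouped predictor is recognized as a column-wise concatenation of the sub-blocks with a stacked selector. Your additional observation that the coupling across groups persists through $W_F$ and $\tilde{G}$ matches the paper's post-corollary discussion of non-explainability rather than adding a new argument.
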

\begin{proof}
    The proof follows by performing the same steps carried out from \eqref{eq:decomposition} to \eqref{eq:multipliers_values}, yet explicitly considering the grouping of the predictor.
\end{proof}
Despite using data matrices that are grouped in an explainable way (according to Definition~\ref{eq:interpretability_loc_def}), this result indicates (as one could already have expected by looking at the explicit solution without grouped data) that the modified Lasso-DeePC is not explainable according to Definition~\ref{eq:interpretability_loc_def}. Indeed, this control strategy does not discriminate between data belonging to different operating regimes of the system if not numerically, i.e., because of the different magnitudes of the data, even if they are organized to reflect differences in the system's operating condition.
\section{A benchmark case study: the unbalanced disk}\label{sec:Numerical}
To empirically validate the conclusions drawn on Lasso-DeePC, we consider the DC motor connected to an unbalanced disk, conventionally used as a benchmark in the linear-parameter varying setting (see, e.g.,\cite{den2021lpvcore}). This system is described by the following difference equation
\begin{equation}\label{eq:pendulum}
\begin{aligned}
    \ddot{y}(t)=\alpha_{1}\cos{(y(t))}+\alpha_{2}\dot{y}(t)+\alpha_{3}u(t),
\end{aligned}
\end{equation}
where $y(t)$~[rad] is the angular position of the unbalanced disk, $u(t)$~[V] is the input voltage, and the coefficients $\alpha_{1}=127.37$~[rad~s$^{-2}$], $\alpha_2=-2.50$~[s$^{-1}$] and $\alpha_3=26.25$~[rad~V$^{-1}$~s$^{-2}$] are dictated by the physical characteristics of the system (see \cite{GitGerben}).

For a sampling time of $0.01$~[s], the Lasso-DeePC scheme has been implemented considering $Q=100$ and $R=1$ in \eqref{eq:cost_DeePC}, setting the prediction horizon to $L=30$ and the past horizon $\rho=40$. Meanwhile, values of $\lambda_g \in [10^{-5},10^{5}]$ for each order of magnitude in this range have been tested. Due to the voltage limits on the DC motor, the input at each time instant has been constrained to $\mathbb{U}=[-10,10]$~[V], while no output constraint has been considered. The output reference to be tracked has been selected so that the system transitions from two distinct input/output operating points\footnote{The input reference is obtained as the one making the output reference an equilibrium for the system.} (OPs), fixing $z_{\mathrm{ini,0}}$ to the values of the initial input/output references. The former choice allows us to $(i)$ highlight the importance of data collection, $(ii)$ showcase the possible benefits of using grouped data structures, as well as $(iii)$ discuss the explainability of Lasso-DeePC\footnote{The code to reproduce our results is available at \url{https://github.com/GiacomelliGianluca/Explainability_Lasso_DeePC}}. 

\subsection{The importance of data collection}
\begin{figure}[!tb]
\centering
\begin{tabular}{cc}
\subfigure[Single OP Hankel\label{fig:single}]{\includegraphics[scale=.5,clip,trim=0cm 0cm 0cm .5cm]{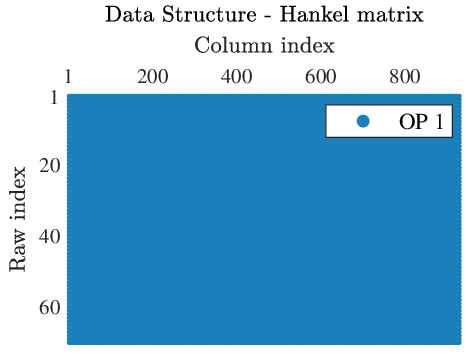}} & \subfigure[Two OPs Hankel\label{fig:2OP_H}]{\includegraphics[scale=.5,clip,trim=0cm 0cm 0cm .5cm]{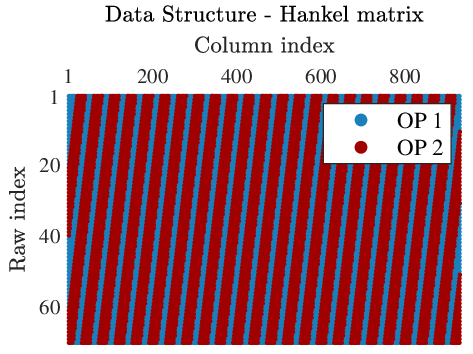}}\\
\multicolumn{2}{c}{\subfigure[Two-blocks Hankel\label{fig:gH}]{\includegraphics[scale=.5,clip,trim=0cm 0cm 0cm .5cm]{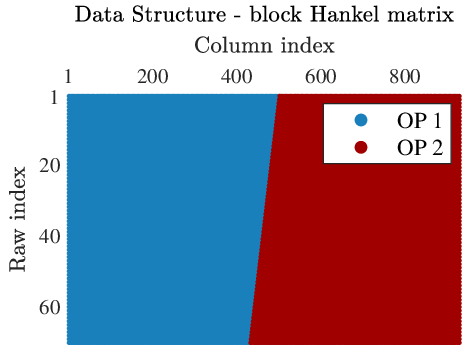}}}
\end{tabular}
\caption{Considered data structures: Hankel matrices comprising data collected (a) around a single operating point, (b) when \textquotedblleft rapidly\textquotedblright \ transitioning between two operating points multiple times, (c) transitioning from one operating point to the other only once.}
    \label{fig:data_structures}
\end{figure}
Before discussing the performance of Lasso-DeePC, let us introduce the data and associated structures used for predictive control design. Specifically, we collect $T_d=1000$ data points by performing three kinds of experiments. First, we only excite the system for it to operate around $0.17$~[rad], thus leading to a dataset exploring a single operating point (see \figurename{~\ref{fig:single}}). We then excite the system to transition between one operating point (i.e., $0.17$~[rad]) and another one (either $1.57$~[rad] or $2.97$~[rad], depending on the reference to be tracked\footnote{The considered OPs correspond to the constant set points for the angular position. %we aim the angular position to track.
}) multiple times, as shown in \figurename{~\ref{fig:2OP_H}}. Lastly, we consider a dataset generated by exciting the system for it to operate first around $0.17$~[rad] and then around either $1.57$~[rad] or $2.97$~[rad], with a single transition between the two operating points. Note that the data structure associated with this last collection procedure (see \figurename{~\ref{fig:gH}}) is an explainable Hankel matrix, which resembles a Mosaic matrix, and we refer to it as \textquotedblleft block Hankel\textquotedblright. Nonetheless, it features a subset of columns where data span multiple OPs, ultimately carrying information on the transition between one operating point of interest and the other.  

Irrespective of the choice of $\lambda_g$ among its considered values, Lasso-DeePC formulated with data collected around a single operating point is never feasible, when tracking a reference switching from $0.17$~[rad] to either $1.57$~[rad] or $2.97$~[rad]. This result is somehow expected, as the data do not satisfy the conditions highlighted in Section~\ref{subsec:setting_data}, thus not being informative to characterize the system in the different OPs of interest. On the other hand, both the other data structures lead to feasible solutions to Lasso-DeePC, highlighting the importance of exploring all the operating points one is interested to track in closed-loop during data collection.

\begin{remark}
    Since Page matrices are known to allow noise handling better than Hankel matrices but they require more data to be constructed, we do not consider them in this preliminary work. We thus defer the experimental evaluation of the benefits of explainable Page matrices to future work.
\end{remark}
\subsection{The impact of $\lambda_g$}
\begin{figure}[!tb]
\centering
\begin{tabular}{cc}
   \subfigure[Case 1]{\includegraphics[scale=.5,trim=0cm 0cm 0cm 0.45cm,clip]{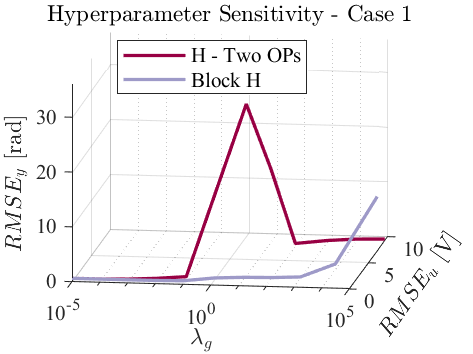}}  & \subfigure[Case 2]{\includegraphics[scale=.5,trim=0cm 0cm 0cm 0.45cm,clip]{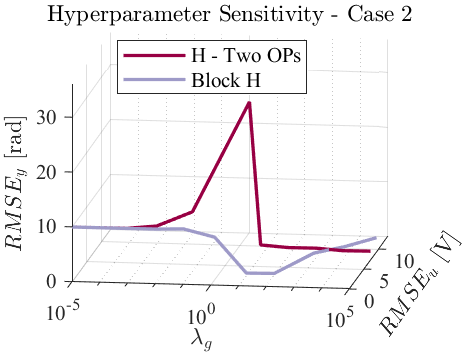}} 
\end{tabular}
    \caption{Sensitivity to $\lambda_g$: performance indicators for the two OPs Hankel \emph{vs} block Hankel for two different \textquotedblleft switching\textquotedblright \ references.}\label{fig:Hyp}
\end{figure}
\begin{table}[!tb]
    \centering
    \caption{Sensitivity to (high) $\lambda_g$: $\text{RMSE}_u$~[V] and $\text{RMSE}_y$~[rad] for the two OPs Hankel \emph{vs} block Hankel for two different \textquotedblleft switching\textquotedblright \ references.}
    \label{tab:performance_lambda}
        \begin{tabular}{lccccc}
    \multirow{3}{*}{} & \multirow{3}{*}{$\lambda_g$} & \multicolumn{2}{c}{Case 1}  & \multicolumn{2}{c}{Case 2}\\ 
    \cline{2-6}
        & & $\text{RMSE}_u$  & $\text{RMSE}_y$ & $\text{RMSE}_u$ & $\text{RMSE}_y$\\
    \hline 
     \multirow{3}{*}{\begin{tabular}[c]{@{}l@{}}Two OPs\end{tabular}} & $10^1$ & 8.41 & 13.31 & 7.55 & 2.00 \\
      & $\mathbf{10^3}$ & 8.67 & \textbf{0.29} & 7.59 & 1.65 \\
      & $\mathbf{10^5}$ & 9.16 & 0.37 & 7.23 & \textbf{1.64}\\
     \hline 
     \multirow{3}{*}{Block} & $\mathbf{10^1}$ & 1.49 & \textbf{0.18} & 2.70 & \textbf{0.36} \\
      & $10^3$ & 1.73 & 0.27 & 6.84 & 1.30 \\
      & $10^5$ & 7.23 & 9.95 & 9.20 & 2.60 \\
     \hline
    \end{tabular} 
\end{table}
We now analyze the impact that the choice of $\lambda_g$ has on closed-loop performance by focusing on the two data structures that lead to a feasible Lasso-DeePC scheme, by considering the following indicators:
\begin{subequations}\label{eq:RMSEs}
\begin{align}
    &\mathrm{RMSE}_{u}=\sqrt{\frac{1}{T_{\mathrm{sim}}}\sum_{t=0}^{T_{\mathrm{sim}}-1}(u_t-u_t^{r})^{2}}~~~\mathrm{[V]},\\
    &\mathrm{RMSE}_{y}=\sqrt{\frac{1}{T_{\mathrm{sim}}}\sum_{t=0}^{T_{\mathrm{sim}}-1}(y_t-y_t^{r})^{2}}~~~\mathrm{[rad]},
\end{align}
\end{subequations}
where $T_{\mathrm{sim}}$ is the length (in steps) of our closed-loop simulations. As shown in \figurename{~\ref{fig:Hyp}}, for small values of $\lambda_g$ the two considered data structures lead to performance that is comparable in terms of both input and output reference tracking, irrespectively of whether the output reference switches from $0.17$~[rad] to either $1.57$~[rad] (case 1) or $2.97$~[rad] (case 2). The differences in performance achieved with the two data structures become nonetheless considerable when $\lambda_g>10^{-1}$, with the block Hankel generally resulting in lower RMSEs with respect to the Two OPs Hankel and the latter leading to better performance only for high values of $\lambda_g$ (see \tablename{~\ref{tab:performance_lambda}}).

\subsection{Performance \emph{vs} explainability}
\begin{figure}[!tb]
\centering
\begin{tabular}{cc}
    \includegraphics[scale=0.5]{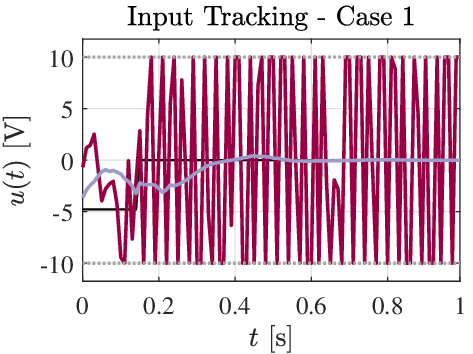} &   \includegraphics[scale=0.5]{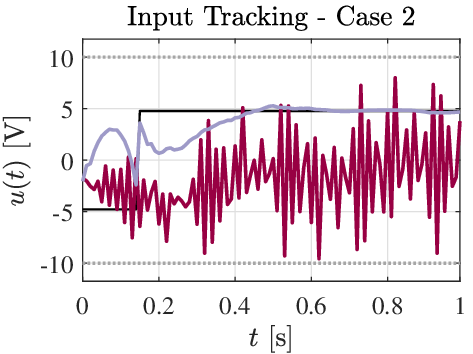}\\
    \includegraphics[scale=0.5]{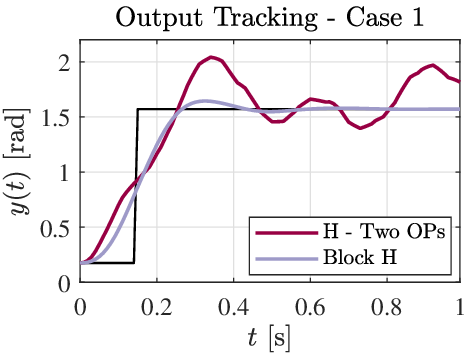} & \includegraphics[scale=0.5]{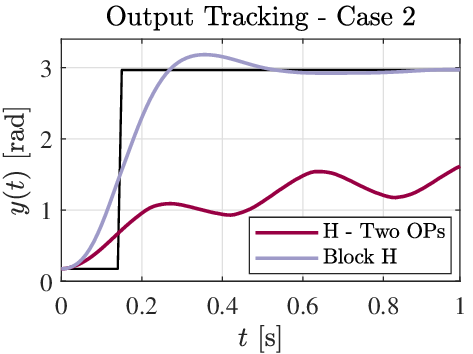}
\end{tabular}
    \caption{Tracking performance of Lasso-based DeePC: two OPs Hankel \emph{vs} block Hankel for two different \textquotedblleft switching\textquotedblright \ references.}
    \label{fig:tracking}
\end{figure}
\begin{figure}[!tb]
\centering
\begin{tabular}{cc}
   \includegraphics[scale=.5]{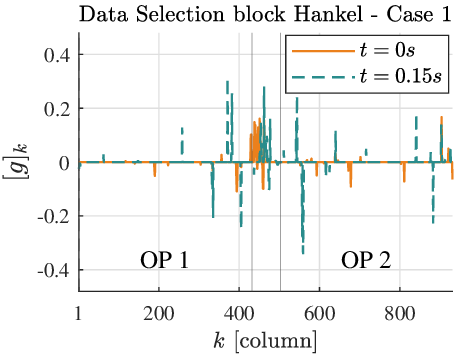}  &  \includegraphics[scale=.5]{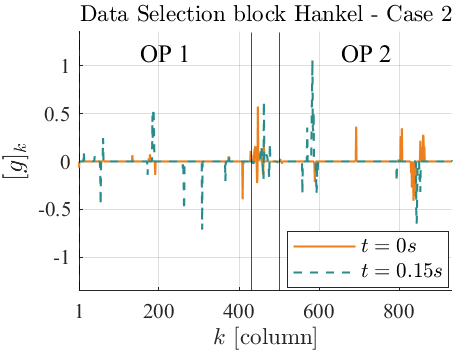}\\
\includegraphics[scale=.5]{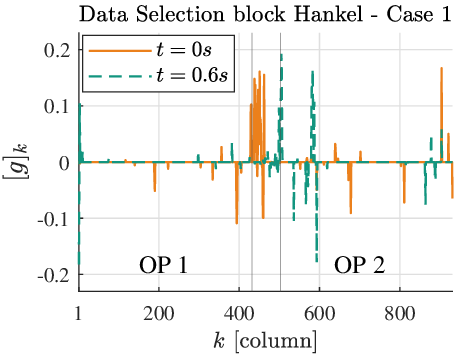} & \includegraphics[scale=.5]{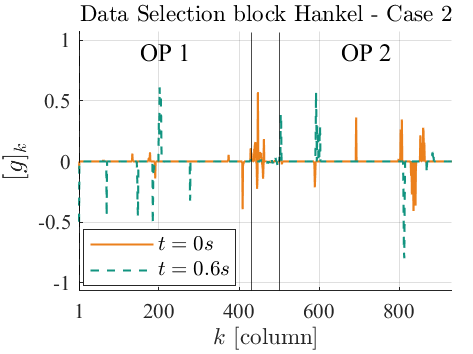}
\end{tabular}
\caption{Values of the selector's component at different instants of the simulation horizon for two different \textquotedblleft switching\textquotedblright \ references.}\label{fig:selectors}
\end{figure}
We now focus on the values of $\lambda_g$ resulting in the least RMSE$_y$ for the block and two OPs Hankels (see the bold values of $\lambda_g$ in \tablename{~\ref{tab:performance_lambda}}). As shown in \figurename{~\ref{fig:tracking}} and expected based on the performance indexes already discussed, only the Lasso-DeePC scheme exploiting the block Hankel matrix allows the unbalanced disk to track the desired reference. Meanwhile, using the Two OPs Hankel results in considerable oscillations of the input in both cases, leading to poor tracking performance when the reference switches from $0.17$ to $1.57$~[rad] (Case 1) and the inability to reach an angular position of $2.97$~[rad] (in Case 2).

By considering only the block Hankel (as it is the only data structure that allows the system to track the desired reference), we then analyze whether performance is paired with an explainable selector (according to Definition~\ref{eq:interpretability_loc_def}). As shown in \figurename{~\ref{fig:selectors}}, the selector picks data collected around both the first and the second operating point irrespective of the considered time instants and, thus, the OP around which we aim the system to operate in closed-loop. This result validates the conclusions drawn in Section~\ref{sec:interpretability}, highlighting that Lasso-DeePC with tailored data structures leads to a control law that (in our tests) allows the unbalanced disk to achieve the desired tracking objectives while being unexplainable. Indeed, Lasso-DeePC does not use only the data characterizing the local behavior of the system around a specific OP, even when tracking such an operating point as a reference. 
\section{Conclusions}\label{sec:conclusions}
%By considering the simplest approach possible to handle nonlinearities, i.e., using DeePC with a Lasso regularization, we have analyzed the explainability of this data-driven predictive control scheme. To this end, we have derived the explicit solution to the Lasso-DeePC problem and introduced different alternatives to group the data matrices used to build the DeePC predictor to make it explainable. Our numerical analysis on a benchmark nonlinear system showcases the advantages of carefully crafting experiment design to achieve an explainable Hankel matrix while indicating that Lasso-DeePC can numerically allow tracking yet lead to an unexplainable solution.

%Future work will be devoted to analyzing the impact of regularization that explicitly accounts for groupings in the selector in DeePC, to an explainability analysis of other data-driven control schemes for nonlinear systems, as well as the exploration of explainable-by-design data-driven predictive control schemes for a specific class of nonlinear systems, namely piecewise affine (PWA) ones.

In this paper, we analyzed the explainability of data-driven predictive control of nonlinear systems by employing  DeePC with Lasso regularization. Specifically, we derived the explicit solution to %the 
a proxy of the Lasso-DeePC problem and proposed different strategies for structuring the data matrices used in the DeePC predictor to enhance explainability. Our numerical study on a benchmark nonlinear system highlights the benefits of carefully designed experiments to obtain %an interpretable Hankel matrix
explainable data structures, while also revealing that, although Lasso-DeePC enables numerical tracking, \textit{it can result in an unexplainable solution}.

Future research will focus on incorporating structured regularization in DeePC to explicitly account for groupings in the selector, extending the explainability analysis to other data-driven control schemes for nonlinear systems, and exploring \textit{explainable-by-design} predictive control approaches tailored to a specific class of nonlinear systems like PieceWise Affine (PWA) systems.

\bibliography{V1_main_VB.bib}
\end{document}